\documentclass[aps,pra,letterpaper,reprint,twocolumn,floatfix,superscriptaddress,notitlepage,usenames,dvipsnames,svgnames,x11names,table,nofootinbib,longbibliography]{revtex4-2}

\pdfoutput=1
\usepackage{graphicx, color, graphpap}      % Include figure files
\usepackage{amsmath}
\usepackage{enumerate}
\usepackage{amssymb}
\usepackage{amsthm}
\usepackage{pstricks}
\usepackage{float}

\usepackage[pdfusetitle]{hyperref}
\hypersetup{pdflang={English},colorlinks=true,linkcolor=RoyalBlue,citecolor=RoyalBlue,urlcolor=ForestGreen,breaklinks=true}

\usepackage[T1]{fontenc}
\usepackage{bbm}
\usepackage{dsfont}
\usepackage[linesnumbered,ruled,vlined]{algorithm2e}
\SetKwInput{kwInit}{Init}
\usepackage{mathtools}

\usepackage{tikz}
\usetikzlibrary{positioning}
\usepackage{xcolor}
\usepackage[caption=false]{subfig}
\usepackage[ruled,vlined]{algorithm2e}
\usepackage{siunitx}
\usepackage{qcircuit}
\usepackage{pifont} % http://ctan.org/pkg/pifont
\usepackage{physics}
\usepackage{thmtools}
\usepackage{thm-restate}


\renewcommand{\eqref}[1]{(\ref{#1})}
\newtheoremstyle{example}{\topsep}{\topsep}%
{}%         Body font
{}%         Indent amount (empty = no indent, \parindent = para indent)
{\bfseries}% Thm head font
{:}%        Punctuation after thm head
{   }%     Space after thm head (\newline = linebreak)
{\thmname{#1}\thmnumber{ #2}}%\thmnote{ #3}}%         Thm head spec
\theoremstyle{example}
\newtheorem{theorem}{Theorem}

\theoremstyle{definition}

\newtheorem*{theorem*}{Theorem}

\def\orcid#1{\kern -0.4em\href{https://orcid.org/#1}{\includegraphics[keepaspectratio,width=0.7em]{orcid_logo.pdf}}}

\renewcommand{\H}{\mathcal{H}}

\usepackage{lipsum}

\long\def\ca#1\cb{} %Use for commenting out: \ca...\cb
\begin{document}
\title{Digital quantum state preparation by nucleation}

\author{Jean Paul Sadia}
\affiliation{Department of Computational Mathematics, Science, and Engineering, Michigan State University, East Lansing, MI 48824, USA}
\affiliation{Center for Quantum Computing, Science, and Engineering, Michigan State University, East Lansing, MI 48824, USA}

\author{Morten Hjorth-Jensen}
\affiliation{Department of Physics and Center for Computing in Science Education, University of Oslo, N-0316 Oslo, Norway}

\author{Dean Lee}
\affiliation{Facility for Rare Isotope Beams and Department of Physics and Astronomy,
Michigan State University, East Lansing, MI 48824, USA}
\affiliation{Department of Physics and Astronomy, Michigan State University, East Lansing, MI 48824, USA}

\author{Ryan LaRose}
\thanks{Corresponding author:  \href{rmlarose@msu.edu}{rmlarose@msu.edu}}
\affiliation{Department of Computational Mathematics, Science, and Engineering, Michigan State University, East Lansing, MI 48824, USA}
\affiliation{Department of Electrical and Computer Engineering, Michigan State University, East Lansing, MI 48824, USA}
\affiliation{Department of Physics and Astronomy, Michigan State University, East Lansing, MI 48824, USA}
\affiliation{Center for Quantum Computing, Science, and Engineering, Michigan State University, East Lansing, MI 48824, USA}

\begin{abstract}
    Quantum state preparation is crucial for quantum algorithms, and improved state preparation methods can reduce resource requirements by orders of magnitude. In this work, we introduce a state preparation method for lattice Hamiltonians on digital quantum computers inspired by nucleation. In our method, the exact ground state of a small lattice Hamiltonian is prepared in a quantum circuit, then a series of growth stages are performed via Trotterized adiabatic evolution to increase the size of the lattice to the desired model. Using the adiabatic theorem and Trotter error bounds, we provide sufficient conditions on the total evolution time and number of Trotter steps required for nucleation to prepare an initial state with desired fidelity. In addition, we introduce optimized nucleation which considers variational optimization on top of the nucleation structure. For a two-dimensional Ising model, we show that optimized nucleation can outperform a current leading method for state preparation on digital quantum computers.
\end{abstract}

% =============================================================================
% =============================================================================
\maketitle
% =============================================================================
% =============================================================================

\section{Introduction}

\begin{figure*}
    \centering
    \includegraphics[width=\linewidth]{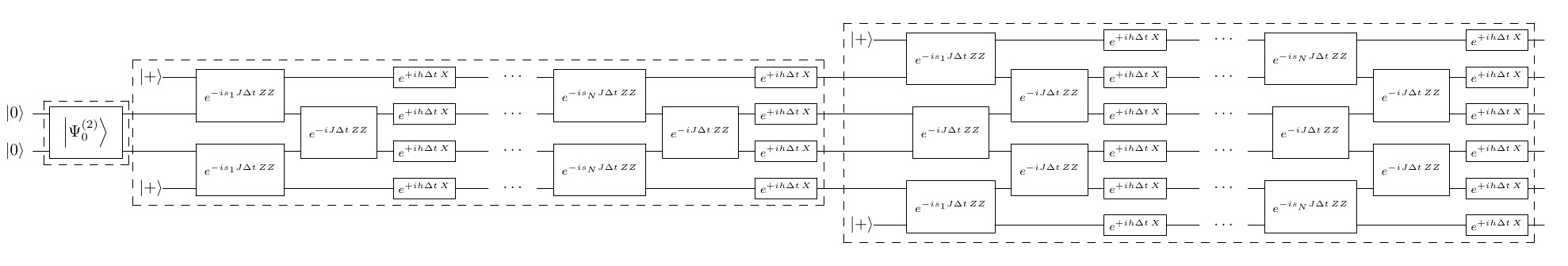}
    \caption{The quantum circuit for nucleation demonstrated for the one-dimensional transverse-field Ising model (TFIM)~\eqref{eqn:tfim}. First, the exact ground state of a small lattice Hamiltonian, here for two qubits, is found and prepared on a quantum computer. Then, we grow the model from a small to a large lattice, here adding two qubits to the edges of the lattice initialized to the $|+\rangle$ state at each step. Each growth stage adiabatically interpolates from the smaller previous lattice to the larger subsequent lattice with new qubits added~\eqref{eqn:nucleation-growth-step}. To do this in a quantum circuit, we use a discretized schedule~\eqref{eqn:schedule} and a number of Trotter steps $N$. Each growth stage evolves for a total evolution time $T$, with $\Delta t = T / N$.}
    \label{fig:nucleation-circuit}
\end{figure*}

Many quantum algorithms require an accurate initial state to guarantee convergence. Such algorithms include phase estimation~\cite{kitaev_quantum_1995}, quantum imaginary time evolution~\cite{motta_determining_2020} quantum Krylov methods~\cite{stair_multireference_2020,epperly_theory_2022,shen_real-time_2023,kirby_analysis_2024}, and the rodeo algorithm~\cite{choi_rodeo_2021}, to name a few. Physically, the problem is known as the (Van-Vleck) orthogonality catastrophe~\cite{van_vleck_nonorthogonality_1936,anderson_infrared_1967,mcclean_exploiting_2014}, and it has long been identified as a major consideration for fault-tolerant quantum algorithms at scale~\cite{mcclean_exploiting_2014}, with recent work~\cite{fomichev_initial_2024} demonstrating orders of magnitude reduction in computational resources via improved state preparation. Because of this, many methods have been proposed for quantum state preparation including variational quantum algorithms~\cite{khatri_quantum-assisted_2019,cerezo_variational_2021}, tensor network methods~\cite{schoen_sequential_2007,ran_encoding_2020,kukliansky_qfactor_2023}, and (by some of the present authors) resolution refinement for lattice models~\cite{bogner_quantum_2026}. While each method has its own advantages and disadvantages depending on the application, the large suite of methods reflects the importance of state preparation as a general routine for quantum computers.

Here, we introduce a method for state preparation on digital quantum computers inspired by nucleation. The idea of nucleation has a long history in physics, with the concept first appearing in thermodynamics by Gibbs in 1878~\cite{gibbs_equilibrium_1874}. In 1926, Volmer and Weber~\cite{volmer_keimbildung_1926} simulated nucleation in supersaturated systems and developed the classical theory of nucleation. After, this was formulated into a statistical mechanical theory by Langer in the 1960s and 1970s~\cite{langer_theory_1967,langer_statistical_1969,langer_metastable_1974}. Wilson's numerical renormalization group (NRG) from 1975~\cite{wilson_renormalization_1975} and subsequently White's density matrix renormalization group (DMRG) from 1992~\cite{white_density_1992} encapsulate the nucleation and growth steps through diagonalization of a small seed and iterative addition of new sites. In the 2000s, this broad framework was translated into the modern language of matrix product states~\cite{vidal_efficient_2003,schollwoeck_density-matrix_2011}, from which the generalization to other tensor networks such as the multiscale entanglement renormalization ansatz (MERA)~\cite{vidal_class_2008} arose. Subsequently, many methods within the family have been developed. Notably, in 2016 Swingle and McGreevy introduced the $s$-source framework in which a ground state of a lattice with $L$ sites grows to the ground state of a lattice with $2L$ sites from $s$ copies of the smaller lattice by a quasi-local unitary transformation~\cite{swingle_renormalization_2016}. In 2020, Olund \textit{et al.} implemented and characterized an algorithm inspired by this $s$-source framework for building quantum many-body ground states, in particular by treating the $s$-source framework as a variational ansatz and then optimizing local unitary transformations via tensor network methods~\cite{olund_adiabatic_2020}.

In this work, we introduce a method for state preparation of lattice Hamiltonians on digital quantum computers which we call nucleation. Nucleation starts with a small lattice (the \textit{seed} or \textit{nucleus}) that can be solved exactly, then iteratively grows to a larger lattice. Nucleation growth stages are done by Trotterized adiabatic evolution written explicitly as a digital quantum circuit. We provide sufficient conditions on the total evolution time and number of (first-order) Trotter steps required to prepare an initial state with desired fidelity. From this, exact gate counts can be determined. While these resource conditions are sufficient, we find numerically that nucleation can prepare high fidelity initial states with significantly fewer resources than expected from these conditions. Motivated by current and near-term quantum hardware constraints, we also introduce \textit{optimized nucleation} where we apply the nucleation circuit structure and initial parameters, then variationally optimize parameters to minimize energy. This provides a tunable tradeoff between Trotterized nucleation, which is efficient  but may have high overhead, and variational optimization, which has low overhead but requires a hard optimization problem. Notably, we show that optimized nucleation can outperform a leading state-of-the-art method, converting a matrix product state (MPS) to a quantum circuit~\cite{ran_encoding_2020,mpstocircuit2025}, for preparing initial states with a two-dimensional Ising model. Indeed, optimized nucleation produces more accurate ground state approximations with fewer two-qubit gates, even when we allow for variational optimization in the MPS to circuit method. For these reasons, we  expect that nucleation will prove useful for initial state preparation on both near-term and fault-tolerant digital quantum computers. 

While our work builds on the long history of nucleation previously discussed and shares common elements with~\cite{swingle_renormalization_2016,olund_adiabatic_2020}, it is distinct in several ways. First, we focus on state preparation for digital quantum computers and provide a complete, non-variational quantum circuit for state preparation which has not appeared in prior work. We further provide rigorous worst-case bounds on gate counts required to achieve a desired target fidelity, which also are not present in prior work to the best of our knowledge. While~\cite{swingle_renormalization_2016,olund_adiabatic_2020} consider doubling the lattice size at each stage, our nucleation procedure is closer to Wilson's NRG~\cite{wilson_renormalization_1975}, White's DMRG~\cite{white_density_1992}, and the site-by-site procedure of~\cite{moosavian_site-by-site_2019} in that we introduce ancilla qubits at neighboring sites in the growth stage. Second, motivated by current and near-term quantum hardware constraints, we introduce optimized nucleation and show that it can significantly reduce the number of required gates. Optimized nucleation is related to the variational procedure of~\cite{olund_adiabatic_2020}, but starts from our (non-variational) nucleation structure, provides an initial ``warm start'' parameterization through Trotterization of a specific adiabatic schedule, and introduces a layer-wise optimization procedure~\cite{skolik_layerwise_2021,larose_mixer-phaser_2022}. In the context of variational quantum state preparation, optimized nucleation provides both a structure and initial parameterization, which are often nontrivial tasks for other ansatzes that can introduce scaling issues~\cite{mcclean_barren_2018}. Finally, we provide a direct benchmark to another leading algorithm for state preparation, converting a matrix product state to a circuit, and show that optimized nucleation can outperform this method for a two-dimensional Ising model. We remark that nucleation is distinct from hierarchical fusion methods for state preparation~\cite{patkowski_hierarchical_2026}, which merge solutions to smaller lattice problems together rather than grow an initial seed, and distinct from resolution refinement for state preparation~\cite{bogner_quantum_2026}, in which the domain size is fixed (no growth occurs) but the resolution of the lattice iteratively increases.

In what follows, we introduce nucleation in Sec.~\ref{sec:nucleation} and prove sufficient conditions for its resource requirements. We then introduce optimized nucleation in Sec.~\ref{sec:optimized-nucleation}, and finally present results of numerical simulations and benchmarks with other state preparation algorithms in~\ref{sec:numerical-results}.

\section{Nucleation} \label{sec:nucleation}

While nucleation applies to any lattice Hamiltonian (satisfying the conditions of Theorem~\ref{thm:adiabatic-time}), for clarity of exposition we consider the transverse-field Ising model (TFIM)
\begin{equation} \label{eqn:tfim}
    \H =  J \sum_{\langle ij \rangle} Z_i Z_j - h \sum_i X_i .
\end{equation}
Here, $J$ is the coupling strength, $h$ is the transverse field strength, $X$ and $Z$ are the Pauli matrices
\begin{equation}
    X = \left[ \begin{matrix}
        0 & 1 \\
        1 & 0
    \end{matrix} \right] 
    \qquad
    Z = \left[ \begin{matrix}
        1 & 0 \\
        0 & -1
    \end{matrix} \right] ,
\end{equation}
and $\langle i j \rangle$ extends over all connected pairs. The quantum circuit for nucleation is shown in Fig.~\ref{fig:nucleation-circuit} for the one-dimensional TFIM. At the start of nucleation, we prepare the exact ground state of a small lattice, shown for two sites $|\Psi_0^{(2)} \rangle$. This state is found exactly on a classical computer and then compiled into a quantum circuit via standard methods. Then, the first growth or nucleation stage occurs. Here, two additional qubits initialized to the $|+\rangle$ state are introduced on both ends of the lattice. Then, we adiabatically grow from the small lattice model to the larger lattice model. Specifically, we interpolate between the Hamiltonians $\H(s = 0)$ and $\H(s = 1)$ defined by
\begin{equation} \label{eqn:nucleation-growth-step}
    \H(s) = J \sum_{\text{bulk}} ZZ + s J \sum_{\text{edges}} ZZ - h  \sum_i X_i .
\end{equation}
Thus, $\H(s = 0)$ corresponds to the initial/previous lattice, and $\H(s = 1)$ corresponds to the subsequent larger lattice. In practice, we use a discretized schedule
\begin{equation} \label{eqn:schedule}
    s_k = \sin^2 \left( \frac{\pi}{2} \frac{k + 1/2}{N} \right) 
\end{equation}
where $k = 0, ..., N - 1$ with $N$ being the number of Trotter steps. Each Trotter step is performed for time
\begin{equation} \label{eqn:trotter-time-step}
    \Delta t = T / N
\end{equation}
for a total evolution time $T$. Note that the $k + 1/2$ in~\eqref{eqn:schedule} corresponds to the midpoint evaluation of the continuous schedule
\begin{equation} \label{eqn:schedule-continuous}
    s(t) = \sin^2 \pi t / 2 T 
\end{equation}
for each Trotter step. Figure~\ref{fig:nucleation-circuit} shows two nucleation steps, adding two qubits to the edge of the one-dimensional chain each time. The nucleation step~\eqref{eqn:nucleation-growth-step} continues until the final lattice Hamiltonian is reached.

Correctness of the algorithm is given by the adiabatic theorem, which asserts that we remain in the ground state of~\eqref{eqn:nucleation-growth-step} as long as $s$ is varied slowly. In the digitized implementation, this means taking the total evolution time $T$ to be sufficiently large, and the total number of Trotter steps to be sufficiently large such that we approximate the adiabatic evolution. 
First, consider the required time $T$. A classic bound for this~\cite[Theorem 3]{Jansen_Ruskai_Seiler_2007} states that
\begin{equation}  \label{eqn:bound}
    A(1) \le \frac{1}{T} \int_{0}^{1} \left[ 
        \frac{|| \ddot{\H} ||}{\Delta^2} + 7 \frac{|| \dot{\H}||^2}{\Delta^3} \, du
    \right] .
\end{equation}
Here, the quantity $A(s)$ is related to fidelity via
\begin{equation} \label{eqn:fidelity-bound}
    1 - F(s) \le A(s)^2
\end{equation}
where
\begin{equation} \label{eqn:fidelity}
    F(s) := | \langle E(s) | \psi(s) \rangle | ^2
\end{equation}
is the fidelity between the exact instantaneous ground state $| E(s) \rangle$ and the finite-time evolved state $|\psi (s) \rangle$, 
and $\Delta$ is the gap between the ground and first excited state. Note that~\eqref{eqn:bound} is exactly~\cite[Theorem 3]{Jansen_Ruskai_Seiler_2007} specialized to the case of a single nondegenerate ground state; the boundary terms of~\cite[Theorem 3]{Jansen_Ruskai_Seiler_2007} vanish via the choice of schedule~\eqref{eqn:schedule-continuous}. Also note that derivatives in~\eqref{eqn:bound} are taken with respect to the dimensionless time parameter
\begin{equation} \label{eqn:dimensionless-time-u}
    u := t / T ,
\end{equation}
and that $\Delta = \Delta(u)$.

\begin{theorem}{\cite[Theorem 3]{Jansen_Ruskai_Seiler_2007}} \label{thm:adiabatic-time}
    To achieve fidelity $F(1) \ge 1 - \epsilon$ with nucleation, with $F$ defined in~\eqref{eqn:fidelity} as the fidelity between the exact instantaneous ground state and the nucleation state, it is sufficient to take total evolution time
    \begin{equation} \label{eqn:adiabatic-time}
        T \ge \frac{1}{\sqrt{\epsilon}} \int_{0}^{1} \left[ 
            \frac{|| \ddot{\H} ||}{\Delta^2} + 7 \frac{|| \dot{\H}||^2}{\Delta^3}  
        \right] \, du .
    \end{equation}
    Here, $\Delta = \Delta(u)$ is the gap between the ground and first excited state, $u = t / T$ is a dimensionless time parameter, and derivatives of $\H$ are with respect to $u$. It is assumed that $\H$ is twice-differentiable along the adiabatic path and has a unique ground state with non-vanishing gap $\Delta$.
\end{theorem}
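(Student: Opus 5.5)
The argument is a direct application of the adiabatic bound~\eqref{eqn:bound} together with the fidelity relation~\eqref{eqn:fidelity-bound}. Abbreviate the integral by
\begin{equation*}
    I := \int_{0}^{1} \left[ \frac{|| \ddot{\H} ||}{\Delta^2} + 7 \frac{|| \dot{\H}||^2}{\Delta^3} \right] du .
\end{equation*}
We first check that the hypotheses of~\cite[Theorem 3]{Jansen_Ruskai_Seiler_2007} hold for a single growth stage. The path $\H(s(u))$ with $s(u) = \sin^2(\pi u/2)$ is smooth in $u$, because $\H(s)$ in~\eqref{eqn:nucleation-growth-step} is affine in $s$. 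By assumption, $\H$ has a unique ground state with nonvanishing gap $\Delta(u)$ on $[0,1]$, so the relevant spectral projection is rank one and $I$ is finite.

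Next we check that the boundary terms vanish. Since $\H$ is affine in $s$, we have $\dot{\H} = \dot s \, J\sum_{\text{edges}} ZZ$, where $\dot s(u) = \frac{\pi}{2}\sin(\pi u)$. This vanishes at $u=0$ and $u=1$. The boundary contributions in~\cite[Theorem 3]{Jansen_Ruskai_Seiler_2007} are of the form $\|\dot{\H}\|/\Delta^2$ evaluated at the endpoints, so they drop out and we are left with exactly $A(1) \le I/T$ as in~\eqref{eqn:bound}. The initial state of the stage is taken to be the exact ground state of $\H(0)$: the previous lattice's ground state tensored with $|+\rangle$ on the new sites. This is the ground state of $\H(0)$ because the new sites only feel $-hX$ and are decoupled from the old lattice when $s=0$. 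This is the setting in which $A$ measures the deviation from the instantaneous ground state.

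The remaining step is algebra. Combining with~\eqref{eqn:fidelity-bound} gives
\begin{equation*}
    1 - F(1) \le A(1)^2 \le \frac{I^2}{T^2} .
\end{equation*}
If $T \ge I/\sqrt{\epsilon}$, which is condition~\eqref{eqn:adiabatic-time}, then $I^2/T^2 \le \epsilon$, so $F(1) \ge 1-\epsilon$.

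The main obstacle is the honest verification of the boundary-term cancellation and the normalization conventions. Specifically, the derivatives in~\cite{Jansen_Ruskai_Seiler_2007} must be with respect to $u = t/T$, so that the $1/T$ scaling is correct. The quantity $A$ there must also bound the distance between the evolved state and the instantaneous ground-state projection, in a way that yields~\eqref{eqn:fidelity-bound} (for a rank-one projection, $1 - F = \|(1-P)\psi\|^2$). I would check these against the original statement before assembling the inequality chain above.
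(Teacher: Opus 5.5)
Your proposal is correct and follows the same route as the paper, whose proof simply combines the specialized Jansen--Ruskai--Seiler bound $A(1)\le I/T$ with $1-F(1)\le A(1)^2$ and then solves $I^2/T^2\le\epsilon$ for $T$. Your additional checks are a useful supplement. These are the vanishing boundary terms from $\dot s(0)=\dot s(1)=0$, and the stage starting in the exact ground state of $\H(0)$; the paper addresses them only in the surrounding text rather than in the proof itself.
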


\begin{proof}
    Follows directly from~\eqref{eqn:bound} and~\eqref{eqn:fidelity-bound}.
\end{proof}

While it is possible to bound~\eqref{eqn:adiabatic-time} in terms of the minimum gap
\begin{equation} \label{eqn:min-gap}
    \Delta_\text{min} := \min_{0 \le u \le 1} \Delta (u)
\end{equation}
for particular Hamiltonians, Theorem~\ref{thm:adiabatic-time} and~\eqref{eqn:adiabatic-time} provide the general behavior. As one example, we can consider the transverse-field Ising model~\eqref{eqn:tfim} and obtain
\begin{equation} \label{eqn:adiabatic-time-tfim}
    T \ge \frac{1}{\sqrt{\epsilon}} \left[ \frac{2 \pi |J|}{\Delta_\text{min}^2} + \frac{7 \pi^2 J^2}{2 \Delta_\text{min}^3 }
    \right] .
\end{equation}
This follows from Theorem~\ref{thm:adiabatic-time} and~\eqref{eqn:tfim}. For completeness, a detailed derivation is provided in Appendix~\ref{sec:proof-of-tfim-adiabatic-time}.

We now move on to the Trotterization step of nucleation and determine the number of Trotter steps $N$ required. We consider Hamiltonians that can be partitioned into groups with mutually commuting terms
\begin{equation} \label{eqn:hamiltonian-AB}
    \H(u) = A(u) + B(u)
\end{equation}
such as the TFIM~\eqref{eqn:tfim}, and we consider the first-order Trotter formula
\begin{equation}
    U_N := \prod_{k = 0}^{N - 1} e^{- i A(u_k) \Delta t} e^{- i B(u_k) \Delta t} . 
\end{equation}
Here,
\begin{equation}
    u_k := \frac{k + 1/2}{N} ,
\end{equation}
and recall that $\Delta t = T / N$~\eqref{eqn:trotter-time-step}. Our goal is to bound
\begin{equation} \label{eqn:trotter-operator-error}
    E(N, T) := ||U_N - U(T)||
\end{equation}
where
\begin{equation}
    U(T) := \mathcal{T} \exp\left( -i T \int_{0}^{1} \H \, du \right)
\end{equation}
is the exact continuous-time evolution. Note that $\mathcal{T}$ is the time-ordering operator. 

\begin{theorem} \label{thm:ntrotter}
    For a Hamiltonian of the form~\eqref{eqn:hamiltonian-AB}, performing nucleation with
    \begin{equation} \label{eqn:ntrotter}
        N \ge \frac{1}{\eta} \left[ 
            \frac{T}{4} \Lambda_1 + \frac{T^2}{2} \Lambda_2
        \right]
    \end{equation}
    first order Trotter steps per nucleation stage is sufficient to achieve operator error~\eqref{eqn:trotter-operator-error} at most $\eta > 0$, where
    \begin{equation} \label{eqn:def-lambda1}
        \Lambda_1 := \max_{0 \le u \le 1} || \dot{\H} ||
    \end{equation}
    and
    \begin{equation} \label{eqn:def-lambda2}
        \Lambda_2 := \max_{0 \le u \le 1} || [ A(u), B(u) ] || . 
    \end{equation}
\end{theorem}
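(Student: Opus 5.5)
We split the exact evolution into $N$ slices and bound the error one slice at a time. Write $U(T)=\prod_{k=0}^{N-1} V_k$, where $V_k$ is the exact time-ordered propagator over $u\in[k/N,(k+1)/N]$, i.e. over physical time $\Delta t = T/N$. The telescoping identity
\begin{equation*}
\prod_k W_k - \prod_k V_k = \sum_{j} \Big(\prod_{k>j} W_k\Big)(W_j - V_j)\Big(\prod_{k<j} V_k\Big),
\end{equation*}
together with unitarity of every factor, gives $E(N,T)\le \sum_{k=0}^{N-1}\|W_k - V_k\|$ with $W_k := e^{-iA(u_k)\Delta t}e^{-iB(u_k)\Delta t}$. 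We insert the intermediate operator $e^{-i\H(u_k)\Delta t}$ and use the triangle inequality:
\begin{equation*}
\|W_k-V_k\| \le \|W_k - e^{-i\H(u_k)\Delta t}\| + \|e^{-i\H(u_k)\Delta t} - V_k\| .
\end{equation*}
The first term is the Trotter splitting error and the second is the error from freezing the Hamiltonian at the midpoint.

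For the splitting error we use the standard first-order bound $\|e^{-iA\tau}e^{-iB\tau}-e^{-i(A+B)\tau}\|\le \frac{\tau^2}{2}\|[A,B]\|$. It follows from writing the difference as $\int_0^\tau \frac{d}{ds}\big(e^{-i(A+B)(\tau-s)}e^{-iAs}e^{-iBs}\big)ds$ and bounding $\|e^{-iAs}Be^{iAs}-B\|\le s\|[A,B]\|$. Each slice therefore contributes at most $\frac{\Delta t^2}{2}\Lambda_2$, and summing over $N$ slices gives $\frac{T^2}{2N}\Lambda_2$.

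For the freezing error we compare the time-dependent evolution with the constant-Hamiltonian evolution on one slice. Pass to the interaction picture with respect to $e^{-i\H(u_k)(t-t_k)}$, and let $D(t)=\H(t/T)-\H(u_k)$. Duhamel's formula then gives $\|V_k - e^{-i\H(u_k)\Delta t}\|\le \int_{t_k}^{t_k+\Delta t}\|D(t)\|\,dt$. Since derivatives are taken in $u$, $\|D(t)\|\le \Lambda_1 |t/T - u_k|$. Because $u_k$ is the midpoint of the slice, $|t/T-u_k|$ ranges symmetrically up to $\frac{1}{2N}$, so the integral equals $T\Lambda_1\cdot 2\int_0^{1/(2N)} v\,dv = T\Lambda_1/(4N^2)$. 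Summing over $N$ slices gives $\frac{T\Lambda_1}{4N}$. The midpoint choice is exactly what produces the factor $1/4$ instead of $1/2$.

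Combining the two pieces gives $E(N,T)\le \frac{1}{N}\big[\frac{T}{4}\Lambda_1+\frac{T^2}{2}\Lambda_2\big]$. Requiring the right-hand side to be at most $\eta$ yields~\eqref{eqn:ntrotter}. The main thing to get right is the freezing estimate: the constants must be tracked through the change of variables $u=t/T$ so that $\Lambda_1$ appears with the factor $T$ rather than $T^2$. The Duhamel bound is first order in $\|D\|$, so we do not need $\ddot{\H}$.
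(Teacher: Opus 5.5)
Your proof is correct and follows the paper's argument essentially exactly. You use the same split into a midpoint-freezing error bounded via Duhamel's principle (giving $T\Lambda_1/(4N)$) and a first-order Trotter splitting error (giving $T^2\Lambda_2/(2N)$); you additionally spell out the telescoping and triangle-inequality step and derive the two standard lemmas that the paper only cites.
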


\begin{figure}
    \centering
    \includegraphics[width=\linewidth]{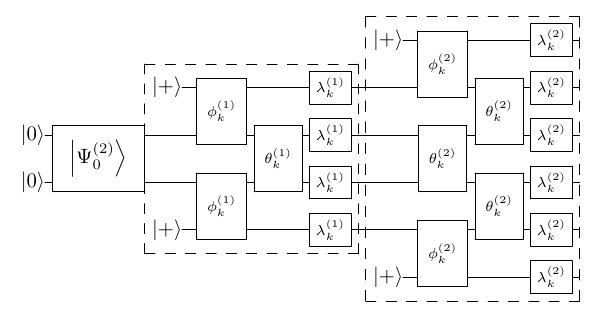}
    \caption{Sketch of the quantum circuit for optimized nucleation. The structure of the circuit is identical to the nucleation circuit in Fig.~\ref{fig:nucleation-circuit}, but here we allow variational optimization of the parameters $\boldsymbol{\theta}, \boldsymbol{\phi}$, and $\boldsymbol{\lambda}$ to minimize the energy. The subscript $k$ indicates the number of layers $k = 1, ..., L$ in each nucleation stage. That is, each dashed box repeats $L$ times, where $L$ is a hyperparameter, with independent parameters in each repetition.}
    \label{fig:optimized-nucleation-circuit}
\end{figure}

\begin{proof}
    There are two sources of error. The first comes from ``freezing'' $\H$ during each time step, and the second comes from using Trotterization within each time step. 

    Let us start with the first source of error. Over one interval centered at $t_k$, the exact propagator differs from the Trotterized operator by
    \begin{equation} \label{eqn:duhamel}
        \epsilon_k \le \int_{t_k - \Delta t / 2}^{t_k + \Delta_t / 2} || \H(t) - \H(t_k) || \, dt 
    \end{equation}
    by Duhamel's principle~\cite[Proposition 15.4]{lin_quantum_nodate}. Using~\eqref{eqn:def-lambda1}, we have
    \begin{equation}
        \left\Vert \frac{d \H}{dt} \right\Vert = \frac{1}{T} || \dot{\H} || \le \frac{\Lambda_1}{T} .
    \end{equation}
    Thus, we have that
    \begin{align}
        || H(t) - H(t_k) || &= \left\Vert \int_{t_k}^{t} \frac{d \H}{dt}  \, dt \right\Vert \\
        &\le \int_{t_k}^{t} \left\Vert  \frac{d \H}{dt}   \right\Vert \, dt \\ 
        &\le \frac{\Lambda_1}{T} |t - t_k | .
    \end{align}
    From~\eqref{eqn:duhamel}, we thus see that
    \begin{equation}
        \epsilon_k \le \frac{\Lambda_1 \Delta t^2}{4 T} .
    \end{equation}
    The total error over all $N$ intervals thus satisfies
    \begin{equation} \label{eqn:error1}
        \epsilon \le N \epsilon_k \le \frac{\Lambda_1 T}{4 N} .
    \end{equation}

    Now we consider the second source of error from Trotterizing each growth step. From~\cite{childs_theory_2021}, the standard first-order Trotter error is
    \begin{equation}
        \left\Vert e^{-i (A + B) \Delta t} - e^{-i A \Delta t} e^{- i B \Delta t} \right\Vert \le \frac{\Delta t^2}{2} || [A, B] || .
    \end{equation}
    Using~\eqref{eqn:def-lambda2}, one step contributes a maximum error of $\Delta t^2 \Lambda_2 / 2$, and over $N$ steps we have a total error of
    \begin{equation} \label{eqn:error2}
        \delta \le \frac{T^2}{2 N} \Lambda_2 .
    \end{equation}

    Combining~\eqref{eqn:error1} and~\eqref{eqn:error2}, we thus have
    \begin{equation}
        E(N, T) \le \frac{1}{N} \left[ 
        \frac{T}{4} \Lambda_1 + \frac{T^2}{2} \Lambda_2
        \right] ,
    \end{equation}
    from which~\eqref{eqn:ntrotter} follows.
\end{proof}

Again it is possible to obtain more detailed bounds for particular Hamiltonians. For example, for the TFIM~\eqref{eqn:tfim} one can show that
\begin{equation} \label{eqn:ntrotter-tfim}
    N \ge \frac{1}{\eta} \left[
        \frac{\pi |J| T}{4} + 2 |Jh| (n - 1) T^2
    \right]
\end{equation}
where $n$ is the number of qubits (lattice sites). For completeness, we prove~\eqref{eqn:ntrotter-tfim} in Appendix~\ref{sec:ntrotter-tfim}.

We remark that bounds for both adiabatic and Trotter evolution can significantly overestimate requirements for convergence. Indeed, recent work~\cite{kovalsky_self-healing_2023} has shown that first-order Trotterization of a complete adiabatic evolution exhibits a ``self-healing'' mechanism and has a cumulative infidelity that scales as $O(T^{-2} \Delta t^2)$ instead of the $O(T^2 \Delta t^2)$ expected from general Trotter error bounds. To probe this, we perform direct numerical simulations of nucleation in Sec.~\ref{sec:numerical-results}, and find that the actual $T$ and $N$ required for convergence markedly smaller than what is expected from the bounds in Theorem~\ref{thm:adiabatic-time} and Theorem~\ref{thm:ntrotter}. Nonetheless, these bounds provide rigorous theoretical guarantees for the correctness and convergence of nucleation.  Although nucleation requires multiple growth stages, each stage changes the Hamiltonian only locally by coupling newly added sites to the existing lattice. Consequently, each growth stage addresses an incremental local state-preparation problem, rather than repeating the preparation of the entire many-body state.  In the numerical examples studied in Sec.~\ref{sec:numerical-results}, we find that the practical resources required are far smaller than the general worst-case bounds. Establishing the asymptotic scaling of the total cost with system size is left for future work. Optimized nucleation, introduced next, provides a practical route to reducing the gate overhead associated with repeated adiabatic growth.

\section{Optimized nucleation} \label{sec:optimized-nucleation}

\begin{figure}
    \centering
    \includegraphics[width=\linewidth]{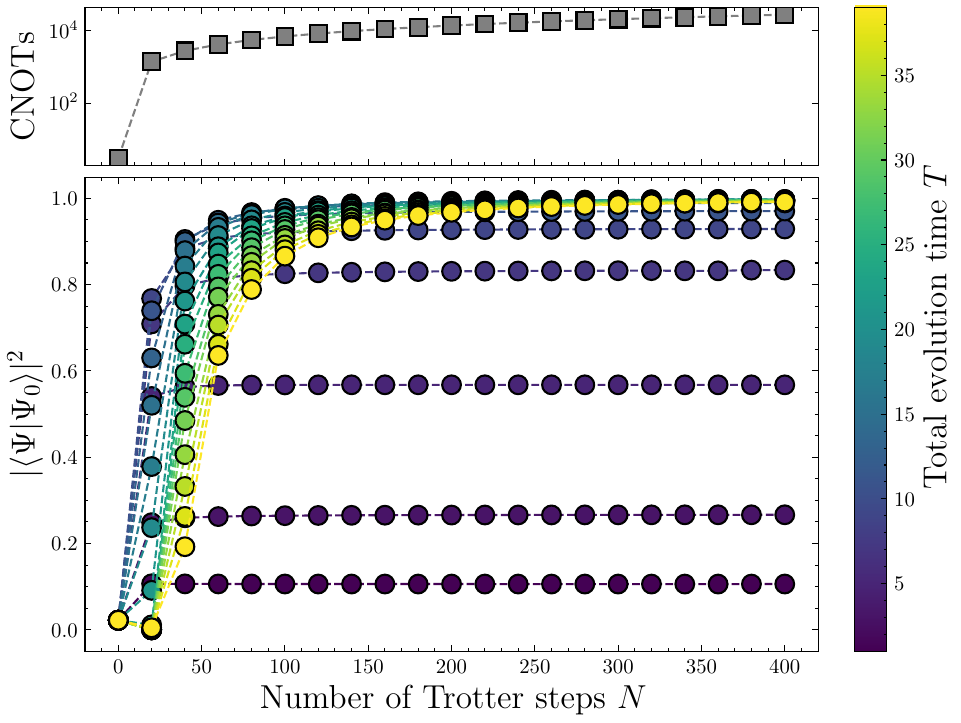}
    \caption{Illustration of nucleation from a two-qubit to a twelve-qubit one-dimensional TFIM~\eqref{eqn:tfim} with $J = 1$ and $h = 1/2$. The colorbar shows the total evolution time $T$ (in units of $1 / J$), and the horizontal axis shows the number of Trotter steps $N$. Initial state overlap is shown at $N = 0$ Trotter steps. As can be seen, the nucleation state $|\Psi\rangle$ approaches the true ground state $|\Psi_0\rangle$ as the total evolution time and number of Trotter steps increases. The highest overlap we achieve is $0.997$, starting from the initial overlap (for all $T$, $N$) of 0.022.}
    \label{fig:1dtfim}
\end{figure}

For near-term and even early fault-tolerant quantum computers, Trotterization can be prohibitively expensive and lead to performance degradation as noise accumulates. Motivated by this, we introduce optimized nucleation, shown in Fig.~\ref{fig:optimized-nucleation-circuit}, to reduce the circuit depth in nucleation, at the cost of increased runtime for variational optimization. Optimized nucleation employs the same circuit structure as nucleation (Fig.~\ref{fig:nucleation-circuit}), but we introduce variational parameters $\boldsymbol{\theta}, \boldsymbol{\phi}$, and $\boldsymbol{\lambda}$ which are optimized to minimize energy. Angles $\boldsymbol{\phi}$ capture the adiabatic schedule --- in other words, the schedule itself is found by energy minimization rather than being explicitly defined as in~\eqref{eqn:schedule}. As such,  $\boldsymbol{\phi}$ parameterizes the new edge gates that are introduced at each nucleation stage. Angles $\boldsymbol{\theta}$ and $\boldsymbol{\lambda}$ parameterize the bulk couplings and local rotations, respectively. Optimized nucleation takes a hyperparameter $L$, the number of layers in each nucleation stage, which sets the number of variational parameters. In Fig.~\ref{fig:optimized-nucleation-circuit}, this is indicated schematically with the subscript $k$, where $k = 1, ..., L$. Parameters in each nucleation stage are independent, and we use a superscript indexing the nucleation stage in Fig.~\ref{fig:nucleation-circuit} to indicate this. We remark again that optimized nucleation shares similarities with the variational procedure of~\cite{olund_adiabatic_2020} which uses tensor network methods for optimization (specifically, the SVD rule for computing the locally optimal unitary to add to the circuit). Our method differs in the parameter initialization by using ``warm start'' values from Trotterization, optimization procedure by using layerwise optimization, and again in the nucleation growth stages where we arithmetically grow the lattice size instead of geometric growth.

\begin{figure}
    \centering
    \includegraphics[width=\linewidth]{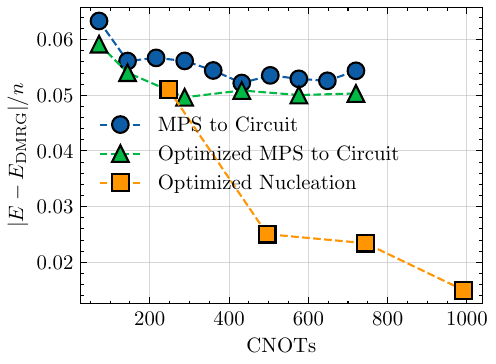}
    \caption{Error vs number of two-qubit gates for the MPS to circuit method and optimized nucleation. These results are for a $5 \times 5$ transverse-field Ising model with $J = 1.0$ and $h = 2.5$. Errors are computed relative to DMRG with bond dimension $\chi_{\text{DMRG}} = 196$. Numerical simulations for computing energies are done with an MPS with bond dimension $\chi = 100$ for both the MPS to circuit and optimized nucleation methods. As shown, optimized nucleation provides a lower energy error than the MPS to circuit method, even when we allow for parameter optimization after the MPS to circuit method. For fairness of comparison, both optimized nucleation and optimized MPS to circuit are provided with the same number of optimization iterations.}
    \label{fig:2dtfim-compare-to-mps}
\end{figure}

Optimized nucleation provides a tradeoff between full nucleation, which has no optimization overhead but potentially high gate overhead, and full variational optimization, which has low gate overhead but potentially high optimization overhead. The nucleation circuit provides a structure for optimization that is guaranteed to converge to the ground state for some set of parameters and large enough $L$. This reduces the complexity of variationally finding the circuit structure, which is a bottleneck for many variational algorithms at scale~\cite{cerezo_variational_2021}. Additionally, nucleation provides an initial parameterization --- i.e., taking angles as in Fig.~\ref{fig:nucleation-circuit} for $N = L$ layers. Good initial parameterization has been found to be important for overcoming barren plateaus~\cite{mcclean_barren_2018}, and this is often nontrivial in other approaches to variational optimization such as the brickwork ansatz~\cite{bravo-prieto_variational_2023}.

Many methods have been developed for optimizing angles in parameterized quantum circuits. Here, we employ a layer-wise optimization approach. In this approach, we start from $L = 1$ layer and optimize the parameters. Once the energy is minimized for $L = 1$, we move to $L = 2$. The initial set of parameters is taken to be the optimal parameters for $L = 1$ and the Trotterized initial parameters for $L = 2$. From these initial values, all angles are then optimized to again minimize the energy. This process continues until the final number of layers $L$ is reached, or until the energy reaches a pre-determined threshold. We remark that this layer-wise optimization procedure is similar to that of Ref.~\cite{skolik_layerwise_2021} and Ref.~\cite{larose_mixer-phaser_2022}. In practice we find the number of layers $L$ required for variational optimization can be significantly less than what is expected from Trotterization, leading to significant reductions in gate counts. These results are shown in Sec.~\ref{sec:numerical-results} and Fig.~\ref{fig:2dtfim-compare-to-mps}, in which optimized nucleation achieves a lower energy error per CNOT than the MPS to circuit method, a current leading approach.

\section{Numerical results} \label{sec:numerical-results}

We first demonstrate the correctness of nucleation and characterize the number of Trotter steps required for convergence for a particular example. Specifically, we consider a one-dimensional transverse-field Ising model~\eqref{eqn:tfim} with $J = 1$ and $h = 1/2$, and perform nucleation for various evolution times $T$ and (first-order) Trotter steps $N$. The results, shown in Fig.~\ref{fig:1dtfim}, show that nucleation converges to the true ground state. Interestingly, this convergence occurs with both $T$ and $N$ significantly smaller than what is expected from the bounds in Theorem~\ref{thm:adiabatic-time} and Theorem~\ref{thm:ntrotter}. Indeed, based on the minimum gap (see Appendix~\ref{sec:additional-numerical-results} and Fig.~\ref{fig:gap}), we would expect $T \ge 2.57 \cdot 10^{3}$ and $N \ge 2.30 \cdot 10^{10}$. As shown, in practice we achieve high overlap with the true ground state with $T \approx 40$ and $N \approx 400$ Trotter steps per stage, or $N \approx 2000$ total Trotter steps over all five growth stages. The highest overlap we achieve is $0.997$, starting from the initial overlap (for all $T$, $N$) of 0.022.

\begin{figure}
    \centering
    \includegraphics[width=\linewidth]{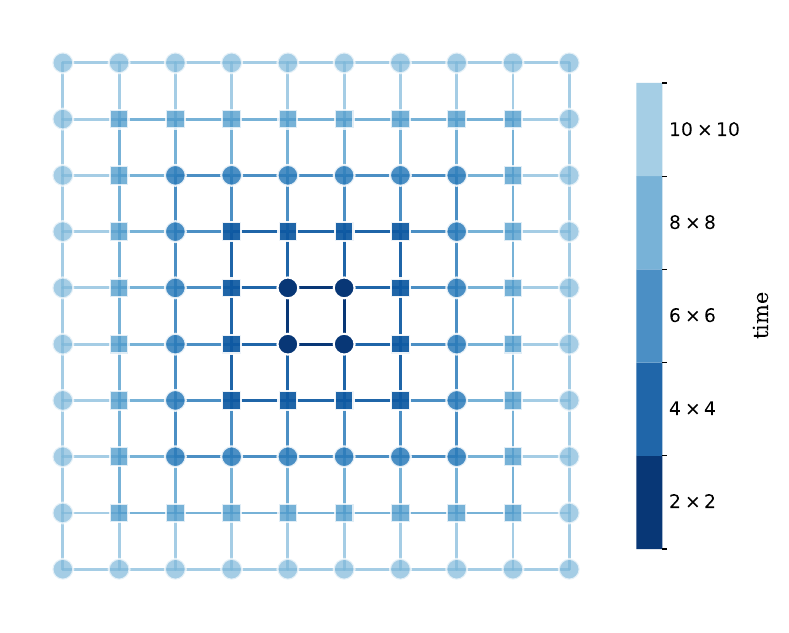}
    \caption{Illustration of two-dimensional nucleation growth, starting with a core $2 \times 2$ lattice and expanding in time to a $4 \times 4$, $6 \times 6$, ..., and a $10 \times 10$ lattice. Different markers are used between growth layers for clarity, and colors progress from darker to lighter as time increases. While this growth pattern is perhaps the most natural to consider, several alternatives are possible, e.g. the column-wise growth shown in Appendix~\ref{sec:additional-numerical-results} Fig.~\ref{fig:nucleation-growth-columns}.}
    \label{fig:nucleation-growth-2d}
\end{figure}

We next consider optimized nucleation. Here, we use a two-dimensional TFIM~\eqref{eqn:tfim} with $J = 1.0$, $h = 2.5$ on a $5 \times 5$ lattice. We compare optimized nucleation to decomposing a matrix product state (MPS) to a circuit (``MPS to circuit''), a current state-of-the-art method for quantum state preparation~\cite{ran_encoding_2020,mpstocircuit2025}. The results, shown in Fig.~\ref{fig:2dtfim-compare-to-mps}, demonstrate that optimized nucleation can reach a lower energy error, measured by the difference with DMRG energy, with fewer two-qubit gates than MPS to circuit. This even holds when we allow for parameter optimization after MPS to circuit (``Optimized MPS to circuit'' in Fig.~\ref{fig:2dtfim-compare-to-mps}). Here, we use the same L-BFGS-B optimizer~\cite{Liu_Nocedal_1989} implemented in SciPy~\cite{virtanen_scipy_2020} for both optimizations, with the same number of optimization iterations for each. We remark that the MPS to circuit method excels for one-dimensional examples and many others, and the two-dimensional example here was chosen specifically as a case where optimized nucleation can provide improved performance. While a full comparison will depend on more examples and particular problem details, we expect that (optimized) nucleation will provide superior performance for certain two- and higher-dimensional lattice models as we see in this instance.

In two dimensions, unlike the one-dimensional case of Fig.~\ref{fig:nucleation-circuit}, there are several more choices for how to grow the lattice in each nucleation growth stage. A natural choice is illustrated in Fig.~\ref{fig:nucleation-growth-2d}, where a small core nucleus grows symmetrically outward from a $2 \times 2$ lattice to a $4 \times 4$ lattice, then to $6 \times 6$ lattice, and so on. Another strategy is to start with a one-dimensional model (e.g., a column) that grows symmetrically outward (left and right) until the full lattice size is reached. This strategy is illustrated in Appendix~\ref{sec:additional-numerical-results} and Fig.~\ref{fig:nucleation-growth-columns}. For ease of implementation, we use this symmetric column growth strategy in the optimized nucleation experiment of Fig.~\ref{fig:2dtfim-compare-to-mps}, though either could be used. The most appropriate choice of nucleation growth strategy will likely depend on details of both the problem and quantum computer, and is left to future work.

% \section{Quantum hardware results}

\section{Conclusion}

We introduced an algorithm for digital quantum state preparation of lattice Hamiltonians based on nucleation. Our algorithm works by preparing an initial state of a small lattice Hamiltonian, then performing growth stages which adiabatically couple the small model to progressively larger lattices. To implement this in a quantum circuit, we Trotterize this adiabatic evolution, providing an end-to-end algorithm for initial state preparation on quantum computers. In addition to proving bounds on the total evolution time and number of Trotter steps required, we also implemented the algorithm numerically. Our numerical results show that nucleation can prepare good initial states with orders of magnitude lower resources than expected from these worst-case bounds. Additionally, we introduced optimized nucleation for quantum state preparation on current/near-term quantum computers, and showed in numerical experiments that this method can outperform a leading state preparation method for a two-dimensional lattice Hamiltonian.

Our paper creates several ideas and directions for future work with current/near-term quantum computers~\cite{preskill_quantum_2018}, Megaquop computers~\cite{preskill_beyond_2025}, and future fault-tolerant quantum computers~\cite{eisert_mind_2025}. In the near-term setting, one can consider implementing optimized nucleation on quantum hardware, and performing a detailed benchmarking study with respect to other state preparation methods. As we have seen, we expect that nucleation can achieve good state preparation with fewer two-qubit gates for certain two- and higher-dimensional lattice models. As two-qubit gates are the primary source of error on current (and likely Megaquop) quantum computers, it could be possible to prepare higher-fidelity initial states. With the use of error mitigation~\cite{larose_mitiq_2022,takagi_fundamental_2022}, it could even be possible to use these initial states for Trotterized time evolution for studying dynamics, or as initial states in other quantum algorithms. In the fault-tolerant setting, future work can include analyzing the error correction resources (physical qubits, $T$ gates, etc.) for nucleation, and again benchmarking against other methods. Notably, nucleation can provide rigorous worst-case upper bounds on the gate counts for the cost of initial state preparation in fault-tolerant quantum algorithms, which is often omitted. Ultimately, as demonstrated and discussed, we believe that nucleation will prove to be a valuable method for preparing initial states of lattice Hamiltonians on digital quantum computers, enabling studies of high-energy physics~\cite{bauer_quantum_2023}, field theories~\cite{jordan_quantum_2011,jordan_quantum_2012,jordan_quantum_2014,nuqs_collaboration_general_2019}, and similar areas with quantum computers.

\vspace{1em}

\section*{Software and data availability}

Software implementing (optimized) nucleation and reproducing all numerical results is available at~\cite{nucleation_github}.

\section*{Acknowledgments}

We acknowledge support from the Department of Energy under grant number DE-SC0023658. RL acknowledges the Aspen Center for Physics, which is supported by National Science Foundation grant PHY-2210452, at which a portion of this work was completed. DL has also received support through Department of Energy grants DE-SC0013365, DE-SC0023175, and DE-SC0026198.  We acknowledge the use of generative AI (Anthropic Claude Opus 5 and OpenAI GPT 5.6 Sol) for help with numerical implementations of nucleation and optimized nucleation, preparing Fig.~\ref{fig:nucleation-growth-2d} and Fig.~\ref{fig:nucleation-growth-columns}, and as a general tool for suggesting references, proof techniques, and reviewing the manuscript and code for correctness.

\bibliographystyle{apsrev4-1}
\bibliography{refs}

\newpage

\appendix

\section{Proof of Eqn.~\eqref{eqn:adiabatic-time-tfim}} \label{sec:proof-of-tfim-adiabatic-time}

Here we prove~\eqref{eqn:adiabatic-time-tfim} from Theorem~\ref{thm:adiabatic-time} and the transverse-field Ising model (TFIM)~\eqref{eqn:tfim}. From Theorem~\ref{thm:adiabatic-time} (Eqn.~\eqref{eqn:adiabatic-time}), we need to evaluate $\dot{\H}$ and $\ddot{\H}$, where dots indicate derivatives with respect to the dimensionless time parameter $u \equiv t / T$~\eqref{eqn:dimensionless-time-u}. We see from~\eqref{eqn:nucleation-growth-step} that 
\begin{equation} \label{eqn:hdot-t}
    ||\dot{\H}|| = \left\Vert \frac{d\H}{du} \right\Vert = \left\Vert J \sum_{\text{edges}} ZZ \frac{ds}{du} \right\Vert .
\end{equation}
The schedule~\eqref{eqn:schedule-continuous} expressed in terms of $u = t / T$ is
\begin{equation}
    s(u) = \sin^2 ( \pi u / 2 ) ,
\end{equation}
from which we have
\begin{equation}
    \frac{ds}{du} = 2 \sin(\pi u / 2) \cos(\pi u / 2) \frac{\pi}{2} = \frac{\pi}{2} \sin ( \pi u ) ,
\end{equation}
where the second equality uses the double angle identity. Thus~\eqref{eqn:hdot-t} is
\begin{equation} \label{eqn:for-lambda1}
    ||\dot{\H}|| = \left\Vert \sum_{\text{edges}} ZZ \right\Vert \frac{\pi}{2} | J \sin ( \pi u) |  = \pi | J \sin ( \pi u) | . 
\end{equation}
Note that $\left\Vert \sum_{\text{edges}} ZZ \right\Vert = 2$ because there are two edges added in each nucleation growth stage in one dimension, as illustrated in Fig.~\ref{fig:nucleation-circuit}. 
This implies that
\begin{equation} \label{eqn:hdot}
    \int_{0}^{1}  7 \frac{|| \dot{\H}||^2}{\Delta(u)^3} \, du \le \frac{7 \pi^2 J^2}{\Delta_\text{min}^3} \int_{0}^{1}  \sin^2 (\pi u) \, du = \frac{7 \pi^2 J^2}{2 \Delta_\text{min}^3}  .
\end{equation}
Further, we have
\begin{equation}
    \frac{d^2s}{du^2} = \frac{\pi^2}{2} \cos ( \pi u ) ,
\end{equation}
and thus
\begin{align*}
    ||\ddot{\H}|| = \pi ^ 2|J\cos(\pi u)|
\end{align*}
by similar reasoning as above. This implies that
\begin{equation} \label{eqn:hddot}
    \int_{0}^{1} \frac{||\ddot{\H}||}{\Delta(u)^2} \, du \le \frac{\pi^2 |J| }{\Delta_\text{min}^2} \int_{0}^{1} | \cos ( \pi u ) | \, du = \frac{2 \pi |J|}{\Delta_\text{min}^2} .
\end{equation}
Substituting~\eqref{eqn:hdot} and~\eqref{eqn:hddot} into Theorem~\ref{thm:adiabatic-time} recovers~\eqref{eqn:adiabatic-time-tfim}.

We remark that we numerically compute the gap $\Delta(u)$ for particular Hamiltonian instances in Appendix~\ref{sec:additional-numerical-results} and compare these bounds to the required resources in practice.

\begin{figure}
    \centering
    \includegraphics[width=\linewidth]{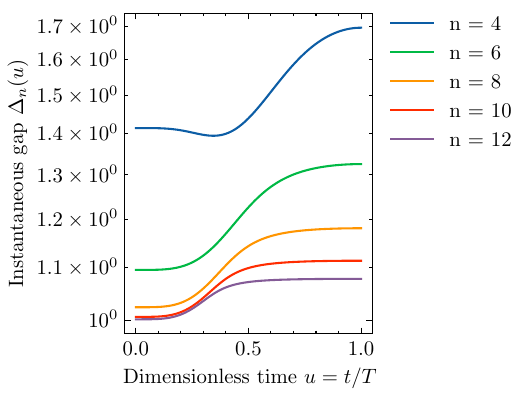}
    \caption{The gap $\Delta = \Delta(u)$ appearing in Theorem~\ref{thm:adiabatic-time} for the one-dimensional transverse-field Ising model~\eqref{eqn:tfim} with $J = 1$ and $h = 1/2$. This corresponds to the setting of Fig.~\ref{fig:1dtfim}. As shown, the minimum gap~\eqref{eqn:min-gap} is $\Delta_\text{min} \simeq 1.001$. Taking $\epsilon = 10^{-3}$, which is comparable to the accuracy achieved in Fig.~\ref{fig:1dtfim}, leads to the bound from~\eqref{eqn:adiabatic-time-tfim} of $T \ge 2.57 \cdot 10^{3}$ and the bound from~\eqref{eqn:ntrotter-tfim} of $N \ge 2.30 \cdot 10^{10}$. Compared to Fig.~\ref{fig:1dtfim} where we see that nucleation provides good initial state preparation for $T \simeq 40$ and $N \simeq 2000$ total Trotter steps, we see that the analytical bounds in Theorem~\ref{thm:adiabatic-time} and Theorem~\ref{thm:ntrotter} are sufficient but can provide orders of magnitude larger estimates than what is necessary for convergence.}
    \label{fig:gap}
\end{figure}

\section{Proof of Eqn.~\eqref{eqn:ntrotter-tfim}} \label{sec:ntrotter-tfim}

Here we prove~\eqref{eqn:ntrotter} from Theorem~\ref{thm:ntrotter} and the transverse-field Ising model~\eqref{eqn:tfim}. We first evaluate~\eqref{eqn:def-lambda1}. Using~\eqref{eqn:for-lambda1}, we see that
\begin{equation} \label{eqn:lambda1-tfim}
    \Lambda_1 := \max_{0 \le u \le 1} || \dot{\H} || = \pi |J| .
\end{equation}

Next, we evaluate~\eqref{eqn:def-lambda2}. To do so, for the TFIM~\eqref{eqn:nucleation-growth-step} we have
\begin{equation}
    A(u) = \sum_{i = 1}^{n - 1} c_i(u) Z_i Z_{i + 1}
\end{equation}
where
\begin{equation}
    c_i(u) := \begin{cases}
        J           & i \in \text{bulk} \\
        s(u) J & i \in \text{edges}
    \end{cases}
\end{equation}
and
\begin{equation}
    B(u) = -h\sum_{i = 1}^{n} X_i .
\end{equation}
Using $[Z, X] = 2 ZX$, we have
\begin{equation}
    [A(u), B(u)] = 2 h \sum_{i = 1}^{n - 1} c_i (u) (X_i + X_{i + 1}) Z_i Z_{i + 1} .
\end{equation}
Thus we can evaluate~\eqref{eqn:def-lambda2} as
\begin{equation} \label{eqn:lambda2-tfim}
    \Lambda_2 := \max_{0 \le u \le 1} || [ A(u), B(u) ] || \le 4 |h J| (n - 1). 
\end{equation}
Substituting~\eqref{eqn:lambda1-tfim} and~\eqref{eqn:lambda2-tfim} into Theorem~\ref{thm:ntrotter} recovers~\eqref{eqn:ntrotter-tfim}.

\section{Additional numerical results\\and details} \label{sec:additional-numerical-results}

\begin{figure}
    \centering
    \includegraphics[width=\linewidth]{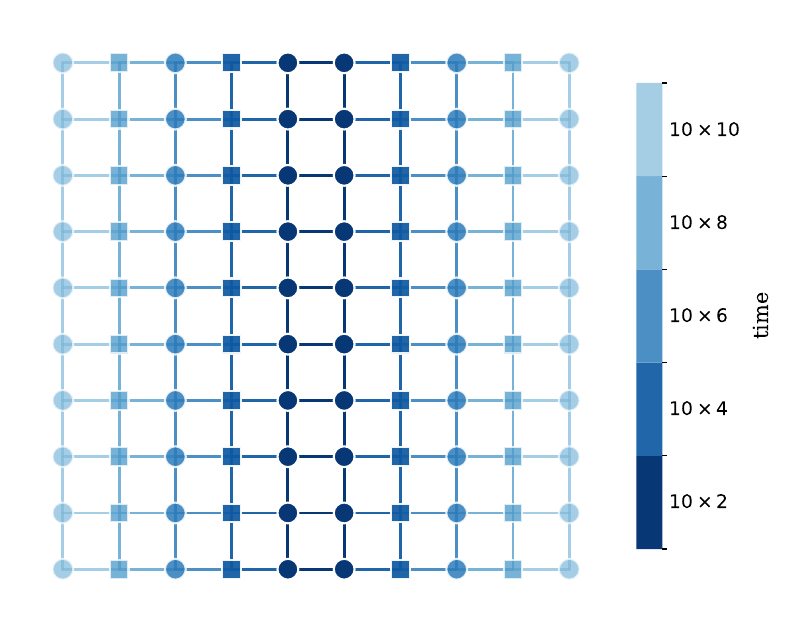}
    \caption{Alternative two-dimensional growth strategy to Fig.~\ref{fig:nucleation-growth-2d}, expanding in columns symmetrically on the left and right. For ease of implementation, this growth strategy was used in Fig.~\ref{fig:2dtfim-compare-to-mps}.}
    \label{fig:nucleation-growth-columns}
\end{figure}

Here we numerically compute the minimum gap~\eqref{eqn:min-gap} of the TFIM~\eqref{eqn:tfim} to evaluate the bound~\eqref{eqn:adiabatic-time-tfim} for the parameter setting of Fig.~\ref{fig:1dtfim} ($J = 1$, $h = 1/2$). To do so, we define a discrete grid of points $u_k$ and evaluate $\Delta(u_k)$, taking the minimum. In practice, we use $101$ grid points in the interval $u \in [0, 1]$ and diagonalize the Hamiltonian projected into the even parity subspace. To diagonalize, we use the Implicitly Restarted Lanczos Method~\cite{Lehoucq_Sorensen_Yang_1998} implemented in SciPy~\cite{virtanen_scipy_2020}. After finding the minimum $\Delta(u_k)$ along the grid, we further minimize $\Delta$ around this $u_k$, then take the smallest found gap during minimization as $\Delta_\text{min}$. This process is performed for all $n$ encountered over the full nucleation algorithm (i.e., over all nucleation growth stages). The results are shown in Fig.~\ref{fig:gap}.

In Fig.~\ref{fig:gap}, we see that the minimum gap~\eqref{eqn:min-gap} for nucleation from the $n = 2$ qubit to $n = 12$ qubit one-dimensional TFIM corresponding to Fig.~\ref{fig:1dtfim} is $\Delta_\text{min} \simeq 1.001$. Taking $\epsilon = 10^{-3}$ and this $\Delta_\text{min}$ substituted into the Theorem~\ref{thm:adiabatic-time} bound specialized to the TFIM (i.e., Eqn.~\eqref{eqn:adiabatic-time-tfim}), we obtain $T \ge 2.57 \cdot 10^{3}$. Recall that, as shown in Fig.~\ref{fig:1dtfim}, we saw that $T \simeq 40$ with $N \simeq 2000$ total Trotter steps was sufficient to produce high overlap with the same problem instance. (Note that Fig.~\ref{fig:1dtfim} performs $400$ Trotter steps per stage and there are five total stages.) Further, we also numerically evaluate the bound on the number of Trotter steps required in Theorem~\ref{thm:ntrotter} specialized to the problem instance (i.e., Eqn.~\eqref{eqn:ntrotter-tfim}). Doing so with $\eta = \sqrt{\epsilon} / 2$ yields $N \ge 2.30 \cdot 10^{10}$ total Trotter steps. 
This illustrates that, while the bounds of Theorem~\ref{thm:adiabatic-time} and Theorem~\ref{thm:ntrotter} are sufficient, they are not strictly necessary, and nucleation may produce good initial states with (significantly) fewer resources than expected from these bounds.

Finally, in Fig.~\ref{fig:nucleation-growth-columns} we shown another growth strategy for two-dimensional lattice models. This provides an alternative to the strategy shown in Fig.~\ref{fig:nucleation-growth-2d}. Whereas in Fig.~\ref{fig:nucleation-circuit} the state grows radially outward from a small core in the center, in Fig.~\ref{fig:nucleation-growth-columns} the state grows symmetrically outward in columns. In practice, we used the strategy shown in Fig.~\ref{fig:nucleation-growth-columns} to produce the numerical results for optimized nucleation shown in Fig.~\ref{fig:2dtfim-compare-to-mps}, though either method, or additional methods, could be used. The best strategy will likely depend on the particular problem under consideration, and determining the optimal strategy for a given problem is a subject of future work.

\end{document}